\title{Optimal In-Place Compaction of Sliding Cubes}
\author{Irina Kostitsyna}{TU Eindhoven, The Netherlands}{i.kostitsyna@tue.nl}{https://orcid.org/0000-0003-0544-2257}{}
\author{Tim Ophelders}{Utrecht University, The Netherlands \and TU Eindhoven, The Netherlands}{t.a.e.ophelders@uu.nl}{https://orcid.org/0000-0002-9570-024X}{}
\author{Irene Parada}{Universitat Politècnica de Catalunya, Spain}{irene.parada@upc.edu}{https://orcid.org/0000-0003-2401-8670}{}
\author{Tom Peters}{TU Eindhoven, The Netherlands}{t.peters1@tue.nl}{https://orcid.org/0000-0002-2702-7532}{}
\author{Willem Sonke}{TU Eindhoven, The Netherlands}{w.m.sonke@tue.nl}{https://orcid.org/0000-0001-9553-7385}{}
\author{Bettina Speckmann}{TU Eindhoven, The Netherlands}{b.speckmann@tue.nl}{https://orcid.org/0000-0002-8514-7858}{}
\authorrunning{I.\ Kostitsyna et al.}
\newcommand{\Z}{\mathbb{Z}}
\newcommand{\N}{\mathbb{N}}
\newcommand{\C}{\mathcal{C}}
\renewcommand{\P}{\mathcal{P}}
\newcommand{\HL}{\mathcal{LH}}
\renewcommand{\L}{\mathcal{L}}
\newcommand{\U}{\mathcal{U}}
\newcommand{\shove}{\mathrm{shove}}
\newcommand{\pillar}[4]{\ensuremath{\langle #1, #2, #3 \, .. \, #4 \rangle}}
\newcommand{\enumit}[1]{\textcolor{lipicsGray}{\sffamily\bfseries\upshape\mathversion{bold}(#1)}}
\keywords{Sliding cubes, Reconfiguration algorithm, Modular robots}
\begin{document}

\maketitle

\begin{abstract}
    The sliding cubes model is a well-established theoretical framework that supports the analysis of reconfiguration algorithms for modular robots consisting of face-connected cubes.
    The best algorithm currently known for the reconfiguration problem, by Abel and Kominers [arXiv, 2011], uses $O(n^3)$ moves to transform any $n$-cube configuration into any other $n$-cube configuration.
    As is common in the literature, this algorithm reconfigures the input into an intermediate canonical shape.
    In this paper we present an in-place algorithm that reconfigures any $n$-cube configuration into a compact canonical shape using a number of moves proportional to the sum of coordinates of the input cubes.
    This result is asymptotically optimal.
    Furthermore, our algorithm directly extends to dimensions higher than three.
\end{abstract}

\section{Introduction}
\label{sec:introduction}

Modular robots consist of a large number of comparatively simple robotic units. These units can attach and detach to and from each other, move relative to each other, and in this way form different shapes or configurations. This shape-shifting ability allows modular robots to robustly adapt to previously unknown environments and tasks. 
One of the major questions regarding modular robots is \emph{universal reconfiguration}: is there a sequence of moves which transforms any two given configurations into each other, and if so, how many moves are necessary? There are a variety of real-world mechatronics or theoretical computational models for modular robots and the answer to the universal reconfiguration question differs substantially between systems~\cite{Akitaya21-pivotingcharacterization}. 

In this paper, we study the \emph{sliding cube model}, which is a well-established theoretical framework that supports the analysis of reconfiguration algorithms for modular robots consisting of face-connected cubes.
In this model, a module (cube) can perform two types of moves: straight-line moves called \emph{slides} and moves around a corner called \emph{convex transitions} (see Figure~\ref{fig:moves}). During a move, the configuration (excluding the moving cube) must stay connected. Furthermore, there have to be sufficient empty cells in the corresponding grid to perform the move (see again Figure~\ref{fig:moves}: the cells indicated with red wire frames must be empty for the move to be possible). 
Maintaining connectivity during a sequence of moves is the main challenge when developing algorithms in the sliding cube model. This connectivity is crucial for most actual modular robotic systems since it allows them to retain their structure, communicate, and share other resources such as energy.

\begin{figure}[t]
    \centering
    \includegraphics{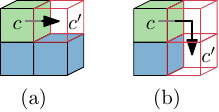}
    \caption{Moves in the sliding cube model: slide (a) and convex transition (b). Solid cubes are part of the configuration, wire frames indicate grid cells that must be empty.}
    \label{fig:moves}
\end{figure}

Almost 20 years ago, Dumitrescu and Pach~\cite{cubes-06} showed that the sliding cube model in 2D (or \emph{sliding square model}) is universally reconfigurable. 
More precisely, they presented an algorithm that transforms any two given configurations with $n$ squares into each other in $O(n^2)$ moves. 
This algorithm transforms any given configuration into a canonical horizontal line and then reverts the procedure to reach the final configuration. 
It was afterwards adapted to be \emph{in-place} using flooded bounding boxes as canonical intermediate configurations~\cite{MS20-flooding}.
Recently, Akitaya et al.~\cite{a2022compacting} presented Gather\&Compact: an input-sensitive in-place algorithm which uses $O(Pn)$ moves, 
where $P$ is the maximum among the perimeters of the bounding boxes of the initial and final configurations. 
The authors also show that minimizing the number of moves required to reconfigure is NP-hard. 

These algorithms in 2D do not directly transfer to 3D: they fundamentally rely on the fact that a connected cycle of squares encloses a well-defined part of the configuration. 
One could ask whether the fact that enclosing space in 3D is more difficult has positive or negative impact on universal reconfiguration in 3D.
Miltzow et al.~\cite{MiltzowPSSW20-3Dcubes} showed that a direct path to reconfiguration in 3D is blocked: there exist 3D configurations in which no module on the external boundary is able to move without disconnecting the configuration. Hence, simple reconfiguration strategies~\cite{hetero-03,slideonly-icalp17} can generally not guarantee reconfiguration for all instances. The best algorithm currently known for the reconfiguration problem in 3D, by 
Abel and Kominers~\cite{hypercubes-v3}, uses $O(n^3)$ moves to transform any $n$-cube configuration into any other $n$-cube configuration, which is worst-case optimal. As is common in the literature, this algorithm reconfigures the input into an intermediate canonical shape.
The question hence arises, if---just as in 2D---there is an input-sensitive reconfiguration algorithm in 3D.

\subparagraph*{Results.} In this paper we answer this question in the affirmative: We present an in-place algorithm that reconfigures any $n$-cube configuration into a compact canonical shape using a number of moves proportional to the sum of coordinates of the input cubes. This result is asymptotically optimal. Furthermore, our algorithm directly extends to hypercube reconfiguration in dimensions higher than three.

\subparagraph*{Additional related work.}
For more restricted sliding models, for example, only allowing one of the two moves in the sliding cube model, reconfiguration is not always possible. 
Michail et al.~\cite{slideonly-icalp17} explore universal reconfiguration using \emph{helpers} or \emph{seeds} (dedicated cubes that help other cubes move). They show that the problem of deciding how many seeds are needed is in PSPACE. 

Another popular model for modular robots is the pivoting cube model, in which the modules move by rotating around an edge shared with a neighboring module.
In this model the extra free-space requirements for the moves that come from pivoting instead of sliding mean that there are configurations in which no move is possible.
Akitaya et al.~\cite{Akitaya21-pivotingcharacterization} show that the reconfiguration problem in this setting is PSPACE-complete.
In contrast, adding five additional modules to the outer boundary guarantees universal reconfigurability in 2D using $O(n^2)$ moves~\cite{AkitayaADDDFKPP21-musketeers}.
Other algorithms for pivoting modules require the absence of narrow corridors in both the initial and final configurations~\cite{FeshbachS21-pivotingnoholes,SungBRR15-pivotingholes}.
A more powerful move is to allow the modules to \emph{tunnel} through the configuration.
With it, $O(n)$ parallel steps suffice to reconfigure 2D and 3D cubes~\cite{squeeze-realistic-11,squeeze-linear-09,squeeze-dist-15}.  
However, for most real-world prototype systems, tunnelling requires the use of \emph{metamodules}~\cite{Parada21-newmetamodule} which are sets of modules which act as a single unit with enhanced capabilities, increasing the granularity of the configurations.

We require that the configuration stays connected at all times. In a slightly different model that relaxes the connectivity requirement (referred to as the \emph{backbone property}),
Fekete et al.~\cite{FeketeKKRS-socg21} show that scaled configurations of labeled squares can be efficiently reconfigured using parallel coordinated moves with a schedule that is a constant factor away from optimal.

\section{Preliminaries}

\begin{wrapfigure}[6]{r}{2.5cm}
    \raggedleft
    \vspace{-0.85\intextsep}
    \includegraphics{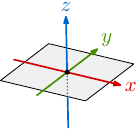}
\end{wrapfigure}
In this paper, we study cubical robots moving in the 3-dimensional grid. The handedness of the coordinate system does not have any impact on the correctness of our algorithm; in our figures we use a right-handed coordinate system with $z$ pointing up.

    A \emph{configuration}~$\C$ is a subset of coordinates in the grid.
    The elements of $\C$ are called \emph{cubes}.
    We call two cubes \emph{adjacent} if they lie at unit distance.
    For a configuration~$\C$, denote by $G_\C$ the graph with vertex set $\C$, whose edges connect all adjacent cubes.
    We say a \emph{cell} is a vertex of $G_{\Z^3}$, which is not occupied by a cube in $\C$.
    We always require a configuration to remain \emph{connected}, that is, $G_\C$ must be connected. 
    For ease of exposition we assume $\C$ consists of at least two cubes.
    We call a configuration~$\C$ \emph{nonnegative} if $\C\subseteq\N^3$.

    In the sliding cubes model, a configuration can rearrange itself by letting cubes perform moves.
    A move is an operation that replaces a single cube $c\in\C$ by another cube $c'\notin\C$.
    Moves come in two types: \emph{slides} and \emph{convex transitions}, see \cref{fig:moves}.
    In both cases, we consider a 4-cycle $\gamma$ in $G_{\Z^3}$.
    For slides, three cubes of $\gamma$ are in $\C$; $c'$ is the cube of $\gamma$ not in~$\C$, and $c$ is adjacent to~$c'$.
    For convex transitions, $\gamma$ has exactly two adjacent cubes in $\C$; $c$ is one of these two cubes, and $c'$ is the vertex of $\gamma$ not adjacent to $c$.
    The slide or convex transition is a \emph{move} if and only if $\C\setminus\{c\}$ is connected.
    
    
    Let $\C$ be a nonnegative configuration.
    Call a cube $c = (x, y, z)$ \emph{finished} if the cuboid spanned by the origin and~$c$ is completely in~$\C$, that is, if $\{0, \ldots, x\}\times \{0, \ldots, y\} \times \{0, \ldots, z\} \subseteq \C$. We call $\C$ \emph{finished} if all cubes in~$\C$ are finished.
    The \emph{compaction problem} starts with an arbitrary connected configuration $\C$ and finishes when all cubes are finished.

    Most of the algorithm works on vertical contiguous strips of cubes in~$\C$ called subpillars. More precisely, a \emph{subpillar} is a subset of~$\C$ of the form $\{x\} \times \{y\} \times \{z_b, \ldots, z_t\}$. In the remainder of this paper, we denote this subpillar by $\pillar{x}{y}{z_b}{z_t}$.
    The cube $(x, y, z_t)$ is called the \emph{head}, and the remainder $\pillar{x}{y}{z_b}{z_t - 1}$ is called the \emph{support}.
    A \emph{pillar} is a maximal subpillar, that is, a subpillar that is not contained in any other subpillar. Note that there can be multiple pillars with the same $x$- and $y$- coordinate above each other, as long as there is a gap between them.
    Two sets $S$ and~$S'$ of cubes are \emph{adjacent} if $S$ contains a cube adjacent to a cube in~$S'$.
    The \emph{pillar graph} $\P_S$ of a set~$S$ of cubes is the graph whose vertices are the pillars of~$S$ and whose edges connect adjacent pillars.

\section{Algorithm}\label{sec:algo}

    For a set of cubes $S \subseteq \C$, denote its \emph{coordinate vector sum} by $(X_S, Y_S, Z_S) = \sum_{(x, y, z)\in S} (x, y, z)$.
    Let $\C_{>0}$ be the subset of cubes $(x, y, z)\in\C$ for which $z > 0$, and $\C_0$ be the subset of cubes for which $z = 0$.
    Let the \emph{potential} of a cube $c = (c_x, c_y, c_z)$ be $\Pi_c = w_c(c_x + 2c_y + 4c_z)$, where the \emph{weight} $w_c$ depends on the coordinates of $c$ in the following way.
    If $c_z > 1$, then $w_c = 5$, if $c_z = 1$, then $w_c = 4$.
    If $c_z = 0$, then $w_c$ depends on $c_y$.
    If $c_y > 1$, then $w_c = 3$, if $c_y = 1$, then $w_c = 2$ and lastly, if both $c_z = c_y = 0$, then $w_c = 1$.
    We aim to minimize the \emph{potential function} $\Pi_\C = \sum_{c\in\C} \Pi_c$.
    From now on, let $\C$ be an unfinished nonnegative configuration.
    We call a sequence of $m$ moves \emph{safe} if the result is a nonnegative instance $\C'$, such that $\Pi_{\C'} < \Pi_{\C}$ and $m = O(\Pi_{\C} - \Pi_{\C'})$.
    This means that the sequence of moves reduces the potential by at least $m$ by going from $\C$ to $\C'$.
    We show that if $\C$ is unfinished, it always admits a safe move sequence.

    The main idea is as follows.
    For a configuration~$\C$, whenever possible, we try to reduce $Z_\C$.
    If that is not possible, then the configuration must admit another type of move, where a complete pillar is moved to a different $x$- and $y$-coordinate.
    In this way, by reducing either the $z$-coordinate of cubes, or the $x$- or $y$-coordinate, we guarantee that eventually every cube becomes finished.

\subparagraph*{Local Z reduction.}
    Let $P = \pillar{x}{y}{z_b}{z_t}$ be a subpillar of $\C$.
    We refer to the four coordinates $\{(x - 1, y), (x + 1, y), (x, y - 1), (x, y + 1)\}$ as the \emph{sides} of~$P$.
    On each side, $P$ may have one or more adjacent pillars. We order these by their $z$-coordinates; as such, we may refer to the top- or bottommost adjacent pillar on a side of~$P$.
    We say that a set of cubes $S\subseteq\C$ is \emph{non-cut} if $G_{\C\setminus S}$ is connected or empty.
    A pillar of~$\C$ is non-cut if and only if it is a non-cut vertex of the pillar graph $\P_\C$.
    
    Let $P = \pillar{x}{y}{z_b}{z_t}$ be a non-cut subpillar, and let $P' = \pillar{x'}{y'}{z'_b}{z'_t}$ be a pillar adjacent to~$P$. We define a set of operations of at most three moves within~$P$ which locally reduce $Z_\C$ (see \Cref{fig:pillar_gaps}).
    Because $P$ is non-cut, $\C \setminus P$ is connected. Therefore, if for each of these operations cubes of~$P$ move in such a way that each component (of cubes originating from~$P$) remains adjacent to a cube of $\C\setminus P$, then the result is a valid configuration.
    
    \begin{figure}[b]
        \centering
        \includegraphics{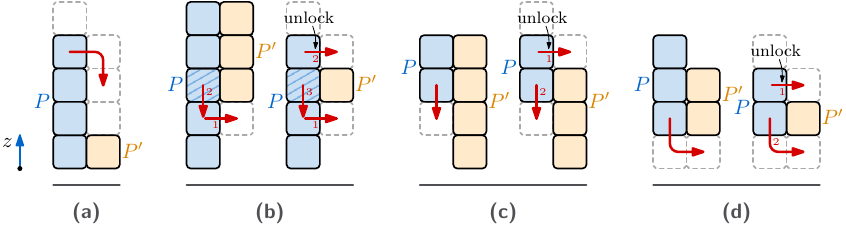}
        \caption{Examples of operations \enumit{\ref{enum:pillar_a}--\ref{enum:pillar_e}}; hatched cubes are non-cut and dashed outlines indicate cells that must be empty. Each case admits a move sequence that reduces $Z_{\C}$.}
        \label{fig:pillar_gaps}
    \end{figure}
    
    \begin{alphaenumerate}
        \item \label{enum:pillar_a}
        If $P'$ is a topmost adjacent pillar of~$P$ and $z'_t \leq z_t - 2$, then the topmost cube of $P$ admits a convex transition that decreases $Z_{\C}$.
        
        
        
        
        \item \label{enum:pillar_c}
        If $z'_b > z_b$ and $(x, y, z'_b - 1)$ is not a cut cube (i.e., $\pillar{x}{y}{z_b}{z'_b - 2}$ has an adjacent cube in $\C$, or $z'_b = z_b + 1$), then there is a move sequence that decreases $Z_{\C}$: first slide $(x, y, z'_b - 1)$ to $(x', y', z'_b - 1)$ and then slide $(x, y, z'_b)$ to $(x, y, z'_b - 1)$.
        There is one special case. We say that $P$ is \emph{locked} if the head of $P$ has no adjacent cubes except for $P$'s support. If $P$ is locked and $z'_b = z_t - 1$, then the second slide would disconnect $P$'s head from the rest of the configuration. To avoid this, before performing the second slide, we \emph{unlock} $P$ by sliding the head of~$P$ from $(x, y, z_t)$ to $(x', y', z_t)$, as shown in the right part of \Cref{fig:pillar_gaps}b.
        
        
        
        \item \label{enum:pillar_d}
        If $(x, y, z_b - 1) \notin \C$ and $z'_b<z_b$, then (after unlocking $P$, if necessary) $(x, y, z_b)$ admits a slide to $(x, y, z_b - 1)$ that decreases $Z_{\C}$.
        
        
        \item \label{enum:pillar_e}
        If $(x, y, z_b - 1) \notin \C$, $P'$ is a bottommost adjacent pillar of~$P$, and $z_b = z'_b > 0$, then (after unlocking $P$, if necessary) $(x, y, z_b)$ admits a convex transition to $(x', y', z'_b - 1)$ that decreases $Z_{\C}$.
              
    \end{alphaenumerate}

    \begin{lemma}\label{lem:moves-a-d}
        Let $P = \pillar{x}{y}{z_b}{z_t}$ be a non-cut pillar. If \enumit{a--d} do not apply to~$P$, then on each side, $P$ has at most one adjacent pillar $P' = \pillar{x'}{y'}{z'_b}{z'_t}$. For these pillars~$P'$, we have $z_t \leq z'_t + 1$, and either $z_b < z'_b$ or $z_b = z'_b = 0$.
    \end{lemma}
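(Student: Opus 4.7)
The plan is to attack the three claims in sequence, each time deriving a contradiction from the assumption that operations (a)--(d) do not apply to $P$.

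First I would prove uniqueness: on each side, $P$ has at most one adjacent pillar. Suppose toward contradiction that some side carries two adjacent pillars at the same horizontal coordinates $(x',y')$, a lower one $P'_1$ and an upper one $P'_2$. Adjacency to $P$ forces each of them to overlap the $z$-range $[z_b,z_t]$ of $P$, and since they are distinct pillars there is at least one empty cell between them, so $z'_{1,t}\in[z_b,\,z'_{2,b}-2]$ and in particular $z'_{2,b}>z_b$. The head $(x',y',z'_{1,t})$ of $P'_1$ is then a cube of $\C\setminus P$ adjacent to $(x,y,z'_{1,t})\in\pillar{x}{y}{z_b}{z'_{2,b}-2}$, so $(x,y,z'_{2,b}-1)$ is not a cut cube and operation (b) would apply to the pair $(P,P'_2)$---contradicting the hypothesis.

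With uniqueness established, the unique adjacent pillar $P'$ on a side is simultaneously topmost and bottommost there. The bound $z_t\leq z'_t+1$ then follows immediately: if it failed, operation (a) would apply. For the bottom condition I would use that $(x,y,z_b-1)$ is never in $\C$---by maximality of $P$ when $z_b>0$, and because the cell lies outside $\N^3$ when $z_b=0$---so the first hypothesis of both (c) and (d) is automatic. If $z'_b<z_b$ then (c) would apply; hence $z'_b\geq z_b$. If moreover $z_b>0$ and $z'_b=z_b$, then (d) would apply to $P'$ (which is also bottommost), so either $z'_b>z_b$, or $z_b=0$ and $z'_b\geq 0$. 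Both alternatives match the disjunction $z_b<z'_b$ or $z_b=z'_b=0$ in the statement.

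The step I expect to require the most care is the invocation of (b) in the uniqueness argument, because its non-cut-cube requirement is phrased implicitly via the existence of an external $\C$-neighbor of $\pillar{x}{y}{z_b}{z'_{2,b}-2}$. Pinpointing the head of $P'_1$ as such a neighbor, and verifying that $z'_{1,t}$ indeed lies in $[z_b,\,z'_{2,b}-2]$, is the one spot where the geometry of two coexisting pillars on a single side is really exploited; everything else reduces to a mechanical inspection of the operations' hypotheses.
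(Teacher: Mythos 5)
Your proposal is correct and uses the same ingredients as the paper's proof: operation (b) excludes two adjacent pillars on one side (the lower pillar's head witnessing the non-cut-cube condition for the upper pillar), (a) yields $z_t \leq z'_t + 1$, and (c), (d) yield the bottom condition using that the cell below a maximal pillar is always empty. The only difference is organizational—you establish uniqueness first and then read off the bounds for the single adjacent pillar, whereas the paper first derives the bottom condition for every adjacent pillar via a case split on whether $(x,y,z'_b-1)$ is a cut cube and deduces uniqueness at the end—so it is essentially the same argument.
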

    \begin{proof}
        Consider one side~$s$ of~$P$. Because \enumit{a} does not apply to~$P$, for any adjacent pillar $P' = \pillar{x'}{y'}{z'_b}{z'_t}$, we know that $z_t \leq z'_t + 1$. Assume that $(x, y, z'_b - 1)$ is not a cut cube. Because \enumit{b} does not apply, $z_b \geq z'_b$, and because \enumit{c} does not apply either, $z_b = z'_b$, and finally because \enumit{d} does not apply, we have $z_b = z'_b = 0$. Now assume that $(x, y, z'_b - 1)$ is a cut cube. Then because \enumit{c} does not apply, we have $z_b \leq z'_b$, and finally because \enumit{d} does not apply, we have $z_b < z'_b$ or $z_b = z'_b = 0$.
        If $z_b = z'_b = 0$ for each adjacent pillar $P'$, then each side of $P$ can have at most one pillar.
        If $z_b < z'_b$, then, because \enumit{b} does not apply, $(x, y, z'_b - 1)$ is a cut cube. Therefore, also in this case, there can be no cube adjacent to the subpillar \pillar{x}{y}{z_b}{z'_b - 2}, and therefore also in this case there can be at most one adjacent pillar on each side.
    \end{proof}
    
    
\subparagraph*{Pillar shoves.}
    Next, we consider longer move sequences that still involve a single subpillar.
    A central operation of our algorithm is a \emph{pillar shove}, which takes as parameters a subpillar $P = \pillar{x}{y}{z_b}{z_t}$ and a side $(x', y')$ of~$P$.
    The result of the pillar shove is the set of cubes
    \[
        \shove(\C, P, (x', y')) \ := \ (\C \setminus P) \ \cup \ \pillar{x'}{y'}{z_b}{z_t - 1} \ \cup \ \{(x, y, z_b)\}\text{,}
    \]
    in which the support is effectively shifted to the side $(x', y')$, and the head is effectively moved from $(x, y, z_t)$ to $(x, y, z_b)$.
    Although $\shove(\C, \pillar{x}{y}{z_b}{z_t}, (x', y'))$ is well-defined, it is not necessarily a connected configuration, let alone safely reachable from~$\C$.
    
    \begin{figure}
        \centering
        \includegraphics{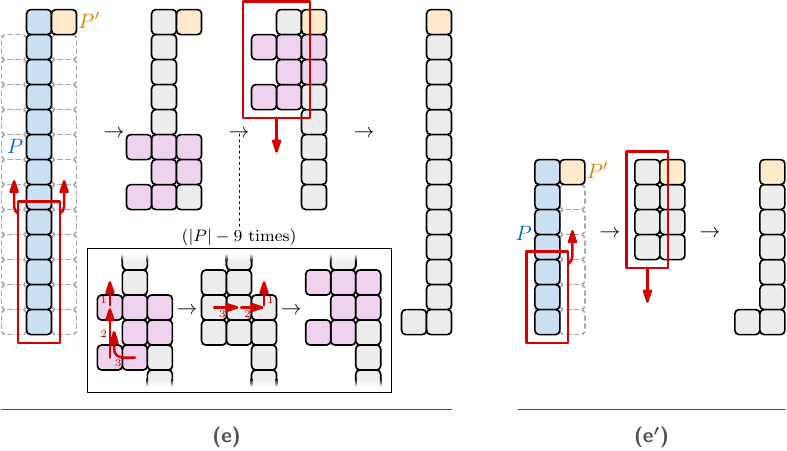}
        \caption{Example of pillar shoves for a long pillar \enumit{\ref{enum:pillar_f}} and a short pillar \enumit{\ref{enum:pillar_f}$'$}.}
        \label{fig:pillar_shove}
    \end{figure}
    
    Let $P = \pillar{x}{y}{z_b}{z_t}$ be a non-cut subpillar, and assume that on at least two sides $(x', y')$ and $(x'', y'')$ of~$P$, no cube except possibly the head $(x, y, z_t)$ has an adjacent cube.
    Moreover, assume that $(x', y', z_t) \in \C$.
    Then the pillar shove can be done without without collisions and while keeping connectivity, as illustrated in \cref{fig:pillar_shove}. There are two cases: one where $P$ has at least $9$~cubes, in which case we take $O(|P|)$ moves (illustrated on the left side of \cref{fig:pillar_shove}). The case where $P$ has fewer than~$9$ cubes takes $O(1)$ moves and does not require the existence of the second side $(x'', y'')$ (illustrated on the right side of \cref{fig:pillar_shove}).
    A pillar shove reduces $Z_{\C}$ by $z_t-z_b$ and takes $O(z_t-z_b)$ moves, so it is safe.
    
    \begin{alphaenumerate}\setcounter{enumi}{4}
        \item
        \label{enum:pillar_f}
        Assume that $P = \pillar{x}{y}{z_b}{z_t}$ has at most one adjacent pillar on each side, and there exists an adjacent pillar $P' = \pillar{x'}{y'}{z'_b}{z'_t}$ with $z'_b > z_b$ (assume that $P'$ is such a pillar with the lowest $z'_b$), and there is a side $(x'', y'')\neq (x', y')$ of~$P$ such that $(x'', y'', z_b) \notin \C$.
        Together, these assumptions imply that both sides $(x'', y'')$ and $(x', y')$ are empty up to at least $z'_b - 1$.
        Then the subpillar $\pillar{x}{y}{z_b}{z'_b}$ (after unlocking $P$, if necessary) admits a pillar shove.
    \end{alphaenumerate}

    \begin{lemma}
        Let $P = \pillar{x}{y}{z_b}{z_t}$ be a non-cut subpillar. If \enumit{\ref{enum:pillar_a}--\ref{enum:pillar_f}} do not apply to $P$, then $P$ has no adjacent pillar $\pillar{x'}{y'}{z'_b}{z'_t}$ with $z'_b > z_b$.
    \end{lemma}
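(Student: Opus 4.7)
The plan is to argue by contradiction: suppose $P$ has an adjacent pillar $P' = \pillar{x'}{y'}{z'_b}{z'_t}$ with $z'_b > z_b$, pick $P'$ to minimise $z'_b$ among such pillars (matching the selection rule in \enumit{\ref{enum:pillar_f}}), and then show that \enumit{\ref{enum:pillar_f}} in fact applies to~$P$, contradicting the hypothesis. Since \enumit{\ref{enum:pillar_a}--\ref{enum:pillar_e}} do not apply, \cref{lem:moves-a-d} is available and immediately supplies the ``at most one adjacent pillar per side'' hypothesis of \enumit{\ref{enum:pillar_f}}; only one task remains, namely exhibiting a side $(x'', y'') \neq (x', y')$ of~$P$ with $(x'', y'', z_b) \notin \C$.

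To produce such a side, I would split on whether $z_b > 0$ or $z_b = 0$. When $z_b > 0$ the argument is direct: \cref{lem:moves-a-d} forces every adjacent pillar of~$P$ to start strictly above $z_b$, so a cube $(x_s, y_s, z_b) \in \C$ on any side~$s$ is impossible (it would lie in an adjacent pillar of base $\leq z_b$, ruling out the alternative $z_b = z'_b = 0$ from \cref{lem:moves-a-d}). Any of the three sides distinct from $(x', y')$ then serves as $(x'', y'')$.

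The main obstacle is the case $z_b = 0$, because \cref{lem:moves-a-d} explicitly permits adjacent pillars on other sides to start at $z = 0$, so the easy argument breaks down. The plan here is to exploit that \enumit{\ref{enum:pillar_c}} does not apply to~$P$ paired with the chosen~$P'$: combined with $z'_b > z_b$, this forces $(x, y, z'_b - 1)$ to be a cut cube. Unpacking the definition of cut cube given with \enumit{\ref{enum:pillar_c}}, this requires $z'_b \geq z_b + 2 = 2$ and that no cube of~$\C$ is adjacent to the subpillar $\pillar{x}{y}{0}{z'_b - 2}$. Since height~$0$ lies in $\{0, \ldots, z'_b - 2\}$, this rules out any cube at $(x_s, y_s, 0)$ on every side~$s$ of~$P$, and once again any side distinct from $(x', y')$ works as $(x'', y'')$. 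With the required empty side produced in both cases, \enumit{\ref{enum:pillar_f}} applies, yielding the desired contradiction.
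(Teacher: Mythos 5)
Your proof is correct; it relies on the same three ingredients as the paper's proof---Lemma~\ref{lem:moves-a-d} together with the non-applicability of \enumit{\ref{enum:pillar_c}} and \enumit{\ref{enum:pillar_f}}---but organizes them differently. The paper first excludes the case of two or more adjacent pillars with base above $z_b$ (the side of the second such pillar is automatically empty at height $z_b$, so \enumit{\ref{enum:pillar_f}} would apply), and then, for a single such pillar, uses the failure of \enumit{\ref{enum:pillar_f}} to conclude that every other side is occupied at height $z_b$, which makes $(x,y,z'_b-1)$ a non-cut cube and contradicts the non-applicability of \enumit{\ref{enum:pillar_c}}. You instead split on $z_b>0$ versus $z_b=0$ and always land the contradiction on \enumit{\ref{enum:pillar_f}}: for $z_b>0$, Lemma~\ref{lem:moves-a-d} alone certifies that all sides are empty at height $z_b$, and for $z_b=0$, the failure of \enumit{\ref{enum:pillar_c}} forces $(x,y,z'_b-1)$ to be a cut cube, which (read as in the paper's parenthetical, i.e., no cube of $\C$ horizontally adjacent to $\pillar{x}{y}{z_b}{z'_b-2}$) empties all sides at height $0$. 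This is in essence a contrapositive rearrangement of the paper's argument; its small payoff is making explicit that for $z_b>0$ the statement already follows from Lemma~\ref{lem:moves-a-d} and \enumit{\ref{enum:pillar_f}}, whereas the paper's two-step version treats both values of $z_b$ uniformly and needs \enumit{\ref{enum:pillar_c}} only once at the end.
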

    \begin{proof}
        Assume that $P$ has at least two adjacent pillars, say $P' = \pillar{x'}{y'}{z'_b}{z'_t}$ and $P'' = \pillar{x''}{y''}{z''_b}{z''_t}$, such that $z_b < z'_b$ and $z_b < z''_b$; let $P'$ denote the lowest one, such that $z_b < z'_b \leq z''_b$. Then $(x'', y'', z_b) \notin \C$, which contradicts that \enumit{\ref{enum:pillar_f}} does not apply.
        Therefore, there can be at most one such pillar.
        However, this, together with Lemma~\ref{lem:moves-a-d}, implies that on all other sides $(x'', y'')\neq(x',y')$, $(x'', y'', z_b) \in \C$.
        This means that $(x, y, z'_b-1)$ is not a cut-cube, contradicting that \enumit{\ref{enum:pillar_c}} does not apply.
        Therefore, there can be no such adjacent pillars.
    \end{proof}
    
    \noindent
    In summary, if \enumit{\ref{enum:pillar_a}--\ref{enum:pillar_f}} do not apply to any non-cut subpillar, then for any non-cut pillar $P = \pillar{x}{y}{z_b}{z_t}$ and any adjacent pillar $\pillar{x'}{y'}{z'_b}{z'_t}$, we have $z'_b=z_b=0$.
    
\subparagraph*{Local potential reduction.}
    We may greedily reduce the potential by moving individual cubes with $z > 0$.
    
    \begin{alphaenumerate}\setcounter{enumi}{5}
        \item Perform any move of $\C$ that moves a cube of $\C_{>0}$, reduces the potential, and results in a nonnegative instance.\label{enum:pillar_g}
    \end{alphaenumerate}
    
    \begin{lemma}\label{lem:single_pillar_component}
        If an unfinished configuration~$\C$ does not admit any move of type \enumit{\ref{enum:pillar_g}}, and $\pillar{x}{y}{z_b}{z_t}$ is a component of $\C_{>0}$ that consists of a single pillar, then $P=\{(0,0,1)\}$ and $(0,0,0)\in\C$.
    \end{lemma}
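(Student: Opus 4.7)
The plan is to assume the conclusion fails and, in every remaining configuration, exhibit a move of type \enumit{\ref{enum:pillar_g}}, contradicting the hypothesis. The argument proceeds in three steps: determining how $P$ is attached to $\C\setminus P$, ruling out $z_t\geq 2$, and ruling out $(x,y)\neq(0,0)$.

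First, I would pin down the adjacency of $P$. Since $P\subseteq\C_{>0}$ we have $z_b\geq 1$. Maximality of $P$ as a pillar removes $(x,y,z_b-1)$ and $(x,y,z_t+1)$ from $\C$; any side neighbor $(x\pm 1,y,z)$ or $(x,y\pm 1,z)$ of $P$ has $z\geq z_b\geq 1$, hence lies in $\C_{>0}$ if in $\C$, and being adjacent to $P$ it would join $P$'s component --- impossible. So the only candidate $\C$-neighbor of $P$ outside $P$ is $(x,y,0)$, which can exist only when $z_b=1$. Consequently, whenever $z_b\geq 2$, or $z_b=1$ with $(x,y,0)\notin\C$, $P$ is an isolated component of $\C$ and connectedness of $\C$ forces $P=\C$.

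Second, I would rule out $z_t\geq 2$ by moving the head. For any $(x',y')$ adjacent to $(x,y)$, both $(x',y',z_t)$ and $(x',y',z_t-1)$ lie in $\C_{>0}$ if occupied and are adjacent to $P$, so the component argument makes them empty. Hence the head admits a convex transition to $(x',y',z_t-1)$, and $\C$ minus the head stays connected because the remainder of $P$ (together with $(x,y,0)$, when $P\neq\C$) does. I choose a nonnegative $(x',y')$ adjacent to $(x,y)$ with smallest $x'+2y'$: this decreases $x'+2y'$ unless $(x,y)=(0,0)$, in which case $(x',y')=(1,0)$ raises it by just $1$. Combined with the drop of $4$ in the $4c_z$ term and the non-increasing change in $w_c$ across the $z=2$ and $z=1$ thresholds, the potential strictly decreases. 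Since $|\C|\geq 2$ forces $z_t\geq 2$ whenever $P=\C$, this also rules out $P=\C$; so $P=\{(x,y,1)\}$ with $(x,y,0)\in\C$.

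Finally, suppose $(x,y)\neq(0,0)$ and WLOG $x\geq 1$. The cube $(x,y,1)$ is a leaf of $G_\C$, so removing it preserves connectivity. I consider the $4$-cycle $(x,y,1),(x,y,0),(x-1,y,0),(x-1,y,1)$: the cell $(x-1,y,1)$ is empty by the component argument. If $(x-1,y,0)\notin\C$, the cycle has exactly two cubes of $\C$ and supports a convex transition of $(x,y,1)$ to $(x-1,y,0)$; if $(x-1,y,0)\in\C$, it has three and supports a slide of $(x,y,1)$ to $(x-1,y,1)$. A direct comparison of $\Pi_c$ at the old and new positions, using the piecewise weights $w_c$, shows the potential strictly decreases in both cases, yielding a move of type \enumit{\ref{enum:pillar_g}}; a symmetric argument for $y\geq 1$ then forces $(x,y)=(0,0)$. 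I expect the main subtlety to lie in the second step: ensuring that the factor-$4$ drop in $4c_z$, together with the possible drop in $w_c$ across the $z=1$ and $z=2$ thresholds, always outweighs the unavoidable $+1$ increase in $x+2y$ forced when $(x,y)=(0,0)$.
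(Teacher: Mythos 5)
Your proof is correct and follows essentially the same route as the paper's: use connectivity to force $z_b=1$ and $(x,y,0)\in\C$, rule out $z_t\geq 2$ by a potential-reducing convex transition of the head, and then move the single remaining cube toward the origin by a slide or convex transition over $(x,y,0)$. The only differences are that you explicitly treat the degenerate case $P=\C$ (which the paper's disconnectedness argument tacitly excludes) and spell out the potential arithmetic; both only add detail to the same argument.
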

    \begin{proof}
        Because $\C_{>0}$ does not contain cubes with $z=0$, we have $z_b=1$ or $z_b>1$.
        If $z_b>1$, then $\C$ would be not connected, so this cannot happen.
        Likewise, if $z_b=1$ and $(x, y, 0)\notin\C$ makes $\C$ disconnected, so this cannot happen either.
        Therefore $z_b=1$ and $(x,y,0)\in\C$.
        If $z_t>1$, then the topmost cube of $P$ can do a convex transition to $(x+1,y,z_t-1)$, reducing the potential.
        Therefore $z_t=1$ and $P=\{(x,y,1)\}$.
        If $x>0$ or $y>0$, then we can move the single cube of~$P$: using the cube at $(x,y,0)$, the cube of $P$ can slide or convex transition closer to the origin $(0,0,0)$.
        Therefore, $z_b=z_t=1$ and $x=y=0$.
    \end{proof}
    \begin{corollary}
        If a nontrivial instance $\C$ does not admit any move of type \enumit{\ref{enum:pillar_g}} then $\C_{>0}$ has at most one component that consists of a single pillar.
    \end{corollary}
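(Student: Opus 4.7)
The plan is to derive the corollary directly from \Cref{lem:single_pillar_component}. That lemma pins down every single-pillar component of $\C_{>0}$ to a unique location: any such component must be exactly $\{(0,0,1)\}$, with the companion cube $(0,0,0)$ present in $\C_0$. Since a component is a specific set of cubes and the lemma forces this set to be the singleton $\{(0,0,1)\}$, two distinct single-pillar components would have to coincide, which is impossible.

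More concretely, I would proceed as follows. Suppose, for contradiction, that $\C_{>0}$ contains two distinct components $K_1$ and $K_2$, each consisting of a single pillar. Apply \Cref{lem:single_pillar_component} to $K_1$: since $\C$ admits no move of type \enumit{\ref{enum:pillar_g}}, we conclude $K_1 = \{(0,0,1)\}$. Apply the same lemma to $K_2$ to get $K_2 = \{(0,0,1)\}$. But then $K_1 = K_2$, contradicting that they are distinct components. Hence $\C_{>0}$ has at most one single-pillar component.

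There is really no obstacle here beyond being careful about what the lemma already gives us; the only subtlety is to observe that the uniqueness of the location $(0,0,1)$ translates into uniqueness of the component. No further case analysis, potential argument, or move construction is required, as the heavy lifting has been done in \Cref{lem:single_pillar_component}.
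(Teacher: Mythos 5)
Your proof is correct and matches the paper's intended argument: the corollary is stated as an immediate consequence of Lemma~\ref{lem:single_pillar_component}, since that lemma forces every single-pillar component of $\C_{>0}$ to be the fixed set $\{(0,0,1)\}$, so two distinct such components cannot exist. No further comment is needed.
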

    
\subparagraph*{Low and high components.}
    Suppose that \enumit{\ref{enum:pillar_a}--\ref{enum:pillar_g}} do not apply.
    Let $\HL_\C$ be the bipartite graph obtained from $G_\C$ by contracting the components of $G_{\C_0}$ and $G_{\C_{>0}}$ to a single vertex (see \Cref{fig:high-low-components}).
    We call $\HL_\C$ the \emph{low-high graph} of $\C$, and we call the vertices of $\HL_\C$ that correspond to components of $G_{\C_0}$ and $G_{\C_{>0}}$ \emph{low} and \emph{high} components, respectively.

    \begin{figure}
        \centering
        \includegraphics{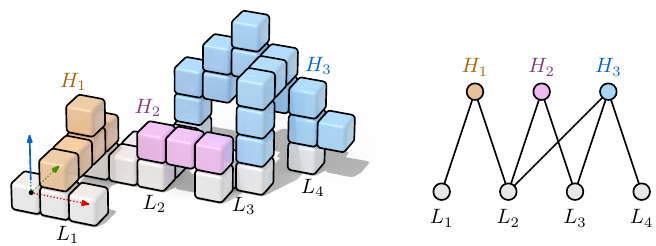}
        \caption{An example configuration~$\C$ and its low-high graph $\HL_\C$.}
        \label{fig:high-low-components}
    \end{figure}
    
    We will use the following lemma several times.
    \begin{lemma}\label{lem:common_non_cut}
        Let $H$ be a high component and $P$ be a pillar of $H$.
        For every component $H'$ of $H\setminus P$, there exists a non-cut pillar of $H'$ that is also a non-cut pillar of $H$.
    \end{lemma}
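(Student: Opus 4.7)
The plan is to work entirely in the pillar graphs $\P_H$ and $\P_{H'}$, exploiting the equivalence (from the preliminaries) that a pillar is non-cut precisely when it is a non-cut vertex of the ambient pillar graph. Since $H'$ is a connected component of $H\setminus P$, the graph $\P_{H'}$ is a connected induced subgraph of $\P_H\setminus P$; and because $H$ itself is connected, the set $N\subseteq V(\P_{H'})$ of pillars of $H'$ adjacent to $P$ in $\P_H$ is nonempty.

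I would first dispose of the degenerate case $V(\P_{H'})=\{Q\}$. Then $Q$ is vacuously non-cut in $H'$, and $\P_H\setminus Q$ consists of $\{P\}$ together with the other components of $\P_H\setminus P$, each of which is adjacent to $P$, so $Q$ is non-cut in $H$ as well. Otherwise, $|V(\P_{H'})|\geq 2$, and I would fix a spanning tree $T'$ of $\P_{H'}$. Since $T'$ has at least two leaves, I can select a leaf $L$ of $T'$ with the extra property that $N\not\subseteq\{L\}$: if $|N|\geq 2$, any leaf works, and if $N=\{v_0\}$ is a single vertex, I pick any leaf distinct from $v_0$.

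It remains to verify the two non-cut conditions for this $L$. Leaves of a spanning tree are non-cut vertices of the underlying graph, because $T'\setminus L$ is a spanning tree of $\P_{H'}\setminus L$; hence $L$ is non-cut in $\P_{H'}$. For the non-cut condition in $\P_H$, I write
\[
    \P_H\setminus L \;=\; \{P\}\,\cup\,(\P_{H'}\setminus L)\,\cup\,\bigcup_{H''\neq H'} \P_{H''},
\]
where $H''$ ranges over the other components of $H\setminus P$. Each $\P_{H''}$ contains a vertex adjacent to $P$, and by the choice of $L$ the set $\P_{H'}\setminus L$ still contains a vertex of $N$, also adjacent to $P$. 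Combined with the connectivity of $\P_{H'}\setminus L$ via $T'\setminus L$, this yields that $\P_H\setminus L$ is connected, so $L$ is non-cut in $H$.

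The only real obstacle is the temptation to pick an arbitrary leaf of $T'$: the arbitrarily chosen leaf might be the unique pillar through which $H'$ attaches to $P$, and deleting it would sever $H'\setminus L$ from $P$ in $\P_H$. The careful selection of $L$ with $N\not\subseteq\{L\}$ sidesteps this pitfall, and is always possible because any tree on $\geq 2$ vertices has $\geq 2$ leaves.
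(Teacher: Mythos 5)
Your proof is correct and matches the paper's argument in substance: both rest on the fact that $\P_{H'}$ has at least two non-cut vertices (spanning-tree leaves) and that one can be chosen so that $H'$ minus that pillar remains attached to $P$, which is exactly what makes it non-cut in $H$ as well. The only difference is presentational---you select the right leaf up front using the attachment set $N$, whereas the paper first takes an arbitrary non-cut pillar of $H'$ and, if it cuts $H$, deduces that it is the unique attachment to $P$ and switches to a second non-cut pillar.
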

    \begin{proof}
        Any component with at least two pillars contains at least two non-cut pillars and every component contains at least one non-cut pillar, so let $P'$ be an arbitrary non-cut pillar of $H'$.
        $H\setminus P'$ has at most two components, namely $H'\setminus P'$ and $H\setminus H'$.
        If $P'$ is a non-cut pillar of $H$, the lemma holds.
        Else, if $P'$ is a cut pillar of $H$, then it has exactly these components, so $H'\setminus P'$ is nonempty and $P$ is adjacent to $P'$.
        Therefore, $H'$ consists of multiple pillars and hence at least two non-cut pillars.
        Let $P''\neq P'$ be a second non-cut pillar of $H'$.
        We claim that $P''$ is also a non-cut pillar of $H$.
        Indeed, because $P$ and $P'$ are adjacent, the sets $H'\setminus P''\supseteq P'$ and $H\setminus H'\supseteq P$ are adjacent.
        Hence, $H\setminus P''=(H'\setminus P'')\cup(H\setminus H')$ has a single component, so $P''$ is a non-cut pillar of~$H$.
    \end{proof}
    
    \begin{lemma}
        Assume $\C$ does not admit any move of type \enumit{\ref{enum:pillar_a}--\ref{enum:pillar_g}}.
        Suppose that $H$ is a high component such that $\C\setminus H$ is connected.
        Then any pillar of $H$ is a non-cut subpillar of $\C$.
    \end{lemma}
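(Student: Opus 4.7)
The plan is to argue by contradiction: suppose some pillar $P$ of $H$ is a cut subpillar of $\C$. Writing $\C\setminus P = (\C\setminus H) \cup (H\setminus P)$ and using that $\C\setminus H$ is connected, the disconnection must isolate some component $H_1$ of $H\setminus P$ from $\C\setminus H$. Since $H$ is a maximal connected subset of $\C_{>0}$, the only adjacencies between $H$ and $\C\setminus H$ occur at \emph{contact pairs}, i.e.\ cubes $(x,y,1)\in H$ sitting directly above $(x,y,0)\in\C_0\subseteq\C\setminus H$. Hence $H_1$ contains no cube from such a pair. The degenerate case $H=P$ needs no argument, since then $\C\setminus P=\C\setminus H$ is already connected.

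Having fixed $H_1$, I would apply \cref{lem:common_non_cut} to obtain a pillar $Q\subseteq H_1$ of $H$ that is non-cut in $H$. The technical heart of the argument is to promote $Q$ to a non-cut pillar of $\C$. I first check that $Q$ is actually a pillar of $\C$, not merely of $H$: the cube above the top of $Q$ cannot be in $\C$, since it would otherwise lie in $\C_{>0}$ and be absorbed into $H$, contradicting the maximality of $Q$ in $H$; the cube just below the bottom of $Q$ is excluded by the same argument when $z_b(Q)\geq 2$, and by the absence of contact pairs in $H_1$ when $z_b(Q)=1$. Next, to see that $Q$ is non-cut in $\C$, note that $H\setminus Q$ is connected (as $Q$ is non-cut in $H$) and $\C\setminus H$ is connected; when $\C\neq H$, the connectivity of $\C$ forces the existence of at least one contact pair of $H$ with $\C_0$, and it must lie in $H\setminus H_1\subseteq H\setminus Q$, so the two pieces are adjacent in $\C\setminus Q$.

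To conclude, I would invoke the summary observation preceding operation \enumit{\ref{enum:pillar_g}}: if none of \enumit{\ref{enum:pillar_a}}--\enumit{\ref{enum:pillar_f}} apply to any non-cut subpillar of $\C$, then every non-cut pillar of $\C$ with at least one adjacent pillar in $\P_\C$ satisfies $z_b=0$. Since $H\neq P$, the pillar graph $\P_H$ is connected with at least two vertices, so $Q$ has an adjacent pillar in $\P_H$; the pillar of $\C$ containing it is then adjacent to $Q$ in $\P_\C$. Hence $z_b(Q)=0$, contradicting $Q\subseteq H\subseteq\C_{>0}$. I expect the promotion step to be the main obstacle: we need $Q$ to be simultaneously a pillar of $\C$ and non-cut in $\C$, and both requirements rely delicately on the assumption that $H_1$ contains no cubes sitting directly above $\C_0$.
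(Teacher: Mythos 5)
Your proof is correct and follows essentially the same route as the paper's: assume a cut pillar $P$, isolate a component of $H\setminus P$ not adjacent to $\C\setminus H$, extract via Lemma~\ref{lem:common_non_cut} a non-cut pillar of $H$ inside it, promote it to a non-cut pillar of $\C$, and contradict the fact that non-cut pillars of $\C$ must start at $z=0$. The only difference is organizational: the paper handles the case ``$z'_b=1$ and $(x',y',0)\in\C$'' at the end (observing $H'$ would then be adjacent to $\C\setminus P$), whereas you rule it out up front via the no-contact-pair observation, and you additionally spell out the promotion step (maximality in $\C$, non-cutness in $\C$, existence of an adjacent pillar) that the paper leaves implicit.
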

    \begin{proof}
        Suppose for a contradiction that a pillar $P$ of $H$ is a cut subpillar of $\C$.
        Then $\C\setminus P$ contains at least one component $H'$ that does not intersect $\C\setminus H$.
        Therefore, $H'$ is also a component of $H\setminus P$, so by Lemma~\ref{lem:common_non_cut}, there exists a non-cut pillar $P' = \pillar{x'}{y'}{z'_b}{z'_t}$ of $H'$ that is also a non-cut pillar of $H$.
        If $z'_b>1$ or $(x',y',0)\notin\C$, then $P'$ would be a non-cut pillar of $\C$. If $\C$ does not admit any move of type \enumit{\ref{enum:pillar_a}--\ref{enum:pillar_g}}, all non-cut pillars of $\C$ start at $z=0$, which contradicts $z'_b>1$ or $(x',y',0)\notin\C$.
        Therefore, $z'_b=1$ and $(x',y',0)\in\C$, but then $H'$ would not be a component of $\C\setminus P$, as $H'$ is adjacent to $(x',y',0)\in\C\setminus P$.
        Hence, $H'$ cannot exist, completing the proof.
    \end{proof}
    \begin{corollary}\label{col:z=0}
        If $H$ is a high component such that $\C\setminus H$ is connected, then every pillar of $H$ is part of a pillar of $\C$ starting at $z=0$.
    \end{corollary}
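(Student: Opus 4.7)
The plan is to chain the preceding lemma with the structural summary established just before it: the lemma gives that every pillar of $H$ is a non-cut subpillar of $\C$, and the summary states that when no moves of type \enumit{\ref{enum:pillar_a}}--\enumit{\ref{enum:pillar_f}} apply, every non-cut pillar of $\C$ has its bottom cube at $z=0$. So my target is: for each pillar $P = \pillar{x'}{y'}{z'_b}{z'_t}$ of $H$, exhibit the extension of $P$ to a pillar of $\C$ that reaches $z=0$.

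I would fix such a $P$ and let $\tilde P = \pillar{x'}{y'}{\tilde z_b}{z'_t}$ be the unique pillar of $\C$ containing $P$. Since $H \subseteq \C_{>0}$ we have $z'_b \ge 1$, and I would split on the status of the cube $c = (x', y', z'_b-1)$. If $c \notin \C$, then $\tilde P = P$, so $P$ is itself a pillar of $\C$; combined with the preceding lemma this makes $P$ a non-cut pillar of $\C$, whence by the summary $z'_b = 0$, a contradiction. So $c \in \C$ must hold. By maximality of $P$ as a subpillar of $H$, the cube $c$ cannot lie in $\C_{>0}$, since otherwise it would be a cube of $H$ adjacent to $P$ extending it downward. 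Therefore $c \in \C_0$, forcing $z'_b = 1$, and the pillar $\tilde P$ of $\C$ containing $P$ extends down at least to $(x', y', 0)$, so $\tilde z_b = 0$, as required.

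I do not anticipate a real obstacle; the heavy lifting is done by the preceding lemma and the summary. The only subtlety is the distinction between $P$ being a non-cut subpillar of $\C$ and $P$ being a non-cut pillar of $\C$: the summary applies only to the latter, and the case analysis above is exactly what upgrades the former to the latter precisely when $c \notin \C$.
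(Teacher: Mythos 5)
Your argument is correct and is essentially the reasoning the paper intends (the corollary is stated without an explicit proof): chain the preceding lemma (every pillar of $H$ is non-cut in $\C$) with the summary that non-cut pillars of $\C$ start at $z=0$, using maximality within $H\subseteq\C_{>0}$ to force the cube below to lie in $\C_0$. The only point you leave implicit is that the pillar of $\C$ containing $P$ also has the same top cube (any cube above would lie in $\C_{>0}$ and hence in $H$, contradicting maximality), which is needed in your first case to conclude $P$ is a pillar, not just a subpillar, of $\C$; this is immediate by the same maximality argument you already use for the bottom.
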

    
    \begin{lemma}
        Assume $\C$ does not admit any move of type \enumit{\ref{enum:pillar_a}--\ref{enum:pillar_g}}.
        If $H$ is a high component such that $\C\setminus H$ is connected, then $H$ consists entirely of finished cubes.
    \end{lemma}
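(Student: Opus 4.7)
The plan is a proof by contradiction: assume $H$ is not finished and exhibit a move of type \enumit{\ref{enum:pillar_g}} that strictly reduces~$\Pi_\C$. Throughout, the key inputs are that every pillar of~$H$ is a non-cut subpillar of~$\C$ (by the preceding lemma) and has the form $Q=\pillar{\alpha}{\beta}{1}{z_t}$ sitting directly on $(\alpha,\beta,0)\in\C_0$ (by Corollary~\ref{col:z=0}). The first step extracts the local picture of~$Q$ from the hypothesis that \enumit{\ref{enum:pillar_a}} and \enumit{\ref{enum:pillar_c}} do not apply. Case~\enumit{\ref{enum:pillar_c}} forces every pillar of~$\C$ adjacent to~$Q$ to satisfy $z'_b\le 1$, and the option $z'_b=1$ is ruled out because $(\alpha',\beta',1)$ would then be adjacent to~$H$, hence in~$H$, forcing $(\alpha',\beta',0)\in\C_0$ by Corollary~\ref{col:z=0} and contradicting $z'_b=1$. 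Case~\enumit{\ref{enum:pillar_a}} then gives $z'_t\ge z_t-1$ for the (unique per side) adjacent pillar, which therefore has the form $\pillar{\alpha'}{\beta'}{0}{z'_t}$.

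Let $T=\{(x,y):(x,y,1)\in H\}$ and $h(x,y)=z_t$. In step~2 I show~$T$ is downward-closed in~$\N^2$. Suppose $(\alpha,\beta)\in T$, $(\alpha-1,\beta)\notin T$, and $\alpha\ge 1$. The same high-component argument shows the column $\{(\alpha-1,\beta,z):1\le z\le z_t\}$ is empty in~$\C$: any cube there would lie in a high component adjacent to~$H$, hence in~$H$, putting $(\alpha-1,\beta,1)\in H$ by Corollary~\ref{col:z=0} and contradicting $(\alpha-1,\beta)\notin T$. The top cube $(\alpha,\beta,z_t)$ of~$Q$ then admits a convex transition to $(\alpha-1,\beta,z_t-1)$ in the generic case, degenerating to a slide to $(\alpha-1,\beta,1)$ when $z_t=1$ and $(\alpha-1,\beta,0)\in\C_0$. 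In every subcase the move is nonnegative, strictly reduces~$\Pi_\C$ by a short computation with the weights~$w_c$, and preserves connectivity: since $Q$ is non-cut, removing only its leaf leaves $Q\setminus\{(\alpha,\beta,z_t)\}$ attached to $\C\setminus Q$ via $(\alpha,\beta,0)\in\C_0$. Thus \enumit{\ref{enum:pillar_g}} applies, a contradiction; the case $(\alpha,\beta-1)\notin T$ is symmetric.

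In step~3 I show $h$ is monotone non-increasing on~$T$. Assume $(\alpha,\beta),(\alpha-1,\beta)\in T$ and $h(\alpha-1,\beta)<h(\alpha,\beta)=z_t$. Then $P'=\pillar{\alpha-1}{\beta}{0}{h(\alpha-1,\beta)}$ is the unique adjacent pillar of~$Q$ on that side, so step~1 forces $h(\alpha-1,\beta)\ge z_t-1$ and hence equality. Since $(\alpha-1,\beta,z_t)\notin\C$, the top of~$Q$ admits a slide to $(\alpha-1,\beta,z_t)$ using the $4$-cycle through $(\alpha-1,\beta,z_t-1)$ and $(\alpha,\beta,z_t-1)$; this decreases~$c_x$ at fixed weight, strictly reduces~$\Pi_\C$, and preserves connectivity as before, a contradiction. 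After the symmetric $y$-argument, for any $(x,y,z)\in H$ and any $(x',y',z')$ in its shadow, downward-closedness of~$T$ yields $(x',y',0)\in\C_0$ and monotonicity of~$h$ yields $z'\le h(x',y')$, so $(x',y',z')\in\C$; hence every cube of~$H$ is finished.

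The main obstacle I anticipate is the case analysis in step~2 when $z_t=1$: depending on whether $(\alpha-1,\beta,0)$ lies in~$\C_0$, the witnessing move toggles between a slide and a convex transition, and because $w_c$ depends on $(c_y,c_z)$, confirming the strict decrease of~$\Pi_\C$ requires short subcase checks for the smallest values of~$\beta$.
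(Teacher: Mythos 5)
Your proof is correct, but it follows a genuinely different route from the paper's. The paper's argument is a short extremal one: since by Corollary~\ref{col:z=0} every pillar of $H$ reaches $z=0$, an unfinished cube of $H$ with $x>0$ or $y>0$ would exist; picking the unfinished cube $c$ that lexicographically maximizes $(z,-y,-x)$, either the cell at $(x-1,y,z)$ (resp.\ $(x,y-1,z)$) is empty, in which case $c$ admits a potential-reducing move into it or into the cell diagonally below it --- a move of type \enumit{\ref{enum:pillar_g}}, contradiction --- or the extremal choice forces those neighbors to be finished, which together with the full column below $c$ makes $c$ itself finished, again a contradiction. You instead prove a global structural statement: the footprint $T$ of $H$ is downward-closed and the height function $h$ is monotone non-increasing away from the origin, i.e.\ $H$ is a monotone staircase, from which finishedness is immediate. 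This costs you the extra step~1 analysis of adjacent pillars via \enumit{\ref{enum:pillar_a}} and \enumit{\ref{enum:pillar_c}} (plus per-side uniqueness), which the paper avoids entirely with its lexicographic trick; in exchange you obtain a stronger structural description of $H$, and all your witness moves happen at the head of a pillar of $H$, where connectivity is easy to certify (as you do). One small imprecision: you attribute the conclusion ``no pillar adjacent to $Q$ has $z'_b\ge 2$'' to \enumit{\ref{enum:pillar_c}} alone, but \enumit{\ref{enum:pillar_c}} also requires $(\alpha,\beta,z'_b-1)$ not to be a cut cube; in your setting this is guaranteed exactly because $Q$ rests on $(\alpha,\beta,0)\in\C_0$ by Corollary~\ref{col:z=0} (alternatively, you could simply cite the paper's lemma following \enumit{\ref{enum:pillar_f}}, which is available under your hypotheses). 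With that made explicit, your argument goes through.
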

    \begin{proof}
        Assume for contradiction that $H$ contains an unfinished cube.
        Because of Corollary~\ref{col:z=0}, every pillar of $H$ is part of a pillar of $\C$ starting at $z=0$.
        Therefore, $H$ contains an unfinished cube $(x,y,z)$ 
        with $x>0$ or $y>0$.
        Let $c$ be such a cube that lexicographically maximizes $(z,-y,-x)$.
        If $x>0$ and $(x-1,y,z)\notin H$ (and thus $(x-1,y,z)\notin \C$), then we can move $c$ to either $(x-1,y,z)$ or $(x-1,y,z-1)$, reducing the potential while maintaining a nonnegative instance, so \enumit{\ref{enum:pillar_g}} would apply.
        If $y>0$ and $(x,y-1,z)\notin H$, then we can similarly move $c$ to either $(x,y-1,z)$ or $(x,y-1,z-1)$.
        On the other hand, if both (1) $x=0$ or $(x-1,y,z)\in H$ and (2) $y=0$ or $(x-1,y,z)\in H$, then because $c$ is the unfinished cube of $H$ that maximizes $(z,-y,-x)$, the cubes $(x-1,y,z)$ (if $x>0$) and $(x,y-1,z)$ (if $y>0$) are finished, but then $(x,y,z)$ would also be finished. Contradiction.
    \end{proof}
    \begin{corollary}
        Any high component that contains a finished cube also contains~$(0,0,1)$. Hence there is at most one such high component.
    \end{corollary}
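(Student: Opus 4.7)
The plan is to show that this corollary follows directly from the \emph{definition} of ``finished'', without needing the full strength of the preceding lemma. Suppose $H$ is a high component containing a finished cube $c=(x,y,z)$. Since $H\subseteq\C_{>0}$, we have $z\geq 1$, and by the definition of finished the entire cuboid $\{0,\ldots,x\}\times\{0,\ldots,y\}\times\{0,\ldots,z\}$ lies in $\C$.

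Next, I would restrict attention to the slab at heights $\geq 1$, i.e.\ the sub-cuboid $B = \{0,\ldots,x\}\times\{0,\ldots,y\}\times\{1,\ldots,z\}$. Every cube of $B$ has positive $z$-coordinate, so $B\subseteq \C_{>0}$. Moreover $B$ is a solid axis-aligned box of cubes, so the induced subgraph of $G_{\C_{>0}}$ on $B$ is face-connected. Since $B$ contains both $c$ and $(0,0,1)$, these two cubes lie in the same connected component of $G_{\C_{>0}}$, which is precisely $H$. Therefore $(0,0,1)\in H$.

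For the second sentence, I would simply observe that $(0,0,1)$ is a single cube and hence belongs to at most one component of $G_{\C_{>0}}$; so at most one high component can contain a finished cube.

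There is no real obstacle here: the only thing to verify carefully is that cutting off the layer $z=0$ from the finished cuboid still leaves a face-connected box inside $\C_{>0}$, which is immediate from $z\geq 1$ and the fact that a rectangular box of grid cubes is trivially connected.
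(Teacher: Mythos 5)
Your proof is correct and matches the argument the paper implicitly relies on (the corollary is stated without proof): a finished cube $(x,y,z)$ in a high component has $z\geq 1$, so the sub-box $\{0,\ldots,x\}\times\{0,\ldots,y\}\times\{1,\ldots,z\}$ lies in $\C_{>0}$, is connected, and links the cube to $(0,0,1)$, and uniqueness follows since $(0,0,1)$ lies in only one component. Your remark that this needs only the definition of finished, not the preceding lemma, is also accurate.
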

    
\subparagraph*{Handling low components.}
    We will now pick a vertex $R$ of $\HL_\C$ that we call the \emph{root} of $\HL_\C$.
    If $(0,0,0)\in\C$, pick $R$ to be the low component that contains $(0,0,0)$.
    Otherwise, pick $R$ to be an arbitrary low component.
    Let $d$ be the maximum distance in $\HL_\C$ from $R$ over all vertices of $\HL_\C$.
    Let $\U$ be the set of vertices of $\HL_\C$ that are locally furthest away (in $\HL_\C$) from $R$.
    That is, all neighbors $v$ of a vertex $u\in \U$ lie closer to $R$.
    All vertices of $\U$ are non-cut subsets of $\C$.
    Therefore, if $\U$ contains a high component $H$, then $H$ consists entirely of finished cubes (and $H$ is adjacent to $R$), so $\U$ contains at most one high component.

    If $d=0$, then $\C$ consists of a single low component.
    If $d=1$, then $\C$ consists of one high and one low component. Set $\U$ contains exactly one high component, and it consists entirely of finished cubes.
    We will handle these two cases later.
    If $d\geq 2$, then $\C$ consists of at least two low components and $\U$ consists of at least one low component, and at most one high component.
    
    We call a low component $L$ \emph{clear} if $\C\setminus L$ is connected, $L\neq R$, and $L$ is connected to  a non-cut pillar $P$ in $\C\setminus L$.
    We call such a pillar $P$ a \emph{clearing} pillar.
    We show in Lemma~\ref{lem:clear_low_component} that if $d\geq 2$, there is at least one clear low component.
    For this, consider a low component that is furthest from $R$ (i.e. at distance $d$), and let $H$ be an adjacent high component.
    Let $\L_H$ be the set of low components in $\U$ that are adjacent to $H$ (and hence also lie at distance $d$ from $R$).
    \begin{lemma}\label{lem:clear_low_component}
        At least one low component $L\in \L_H$ is connected to $H$ via a non-cut pillar of $\C\setminus L$.
    \end{lemma}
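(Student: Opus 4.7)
The plan is to combine an extremal, recursive choice of $(L, P)$ with Lemma~\ref{lem:common_non_cut}. I first record the structural observation that a pillar $P = \pillar{x}{y}{z_b}{z_t}$ of $H$ is adjacent in $\C$ to at most one low component, namely the one containing $(x, y, 0)$ (which requires $z_b = 1$), because $P$'s base is its only cube at height one. Each $L' \in \L_H$ thus attaches to $H$ through a specific set of ``ground'' pillars of $H$. Moreover, since $d \geq 2$ and $H$ sits at distance $d - 1$ from $R$ in $\HL_\C$, $H$ has at least one $\HL_\C$-neighbour at distance $d - 2$; write $\L_H^-$ for the set of such ``near'' low components adjacent to $H$.

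The central intermediate claim is a distance estimate: for any $L \in \L_H$ and any pillar $P$ of $H$ adjacent to $L$, no component of $(\C \setminus L) \setminus P$ other than the one containing $R$ can contain a vertex of $\L_H^-$. Indeed, any $L^- \in \L_H^-$ admits a shortest $\HL_\C$-path to $R$ of length $d - 2$; this path cannot pass through $H$ (as that would force length $\geq 1 + (d - 1) = d$) nor through $L$ (as bipartite parity together with $\mathrm{dist}(L, R) = d$ would force length $\geq 2 + d$). Hence the cube-realisation of this $\HL_\C$-path survives the removal of $L$ and $P$, placing $L^-$ in the $R$-component.

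Given this estimate, the lemma follows by a recursive choice. Start with an arbitrary $L_0 \in \L_H$ and a pillar $P_0$ of $H$ adjacent to $L_0$; if $P_0$ is non-cut in $\C \setminus L_0$, we are done. Otherwise, pick a component $Q$ of $(\C \setminus L_0) \setminus P_0$ not containing $R$, let $H_Q$ be the sub-component of $H \setminus P_0$ in $Q$, and apply Lemma~\ref{lem:common_non_cut} to $H$ and $P_0$ to obtain a pillar $P_1 \in H_Q$ that is non-cut in both $H_Q$ and $H$. By the distance estimate above, every low component of $Q$ adjacent to $H_Q$ lies in $\L_H$, so the low component $L_1$ sitting under $P_1$'s base --- provided one exists --- belongs to $\L_H$, and we recurse on $(L_1, P_1)$. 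The main obstacle is to guarantee both that a ground pillar $P_1$ with a non-empty base cell can always be picked (otherwise the recursion stalls) and that the recursion terminates; I expect both issues to be resolvable by choosing $P_1$ among the lowest non-cut pillars of $H_Q$ and by tracking the strict nesting of the chosen sub-components $H_{Q_0} \supsetneq H_{Q_1} \supsetneq \cdots$, using that $H_Q$ is not a floating sub-component of $\C \setminus L_0 \setminus P_0$ (as $\C \setminus L_0$ is connected by $L_0 \in \U$).
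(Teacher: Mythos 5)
There is a genuine gap, and it sits exactly at the point you flag as the ``main obstacle'': you never guarantee that the non-cut pillar $P_1$ produced by Lemma~\ref{lem:common_non_cut} is grounded, i.e.\ has $z_b'=1$ with an occupied cell $(x',y',0)$ beneath it, so that a low component $L_1$ exists at all. Your proposed remedy (``choose $P_1$ among the lowest non-cut pillars of $H_Q$'') cannot work on its own: Lemma~\ref{lem:common_non_cut} hands you \emph{some} non-cut pillar and gives no control over its height, and nothing in your argument so far prevents every non-cut pillar of $H_Q$ from floating above an empty cell. The missing ingredient is the standing hypothesis that no moves of type \enumit{\ref{enum:pillar_a}}--\enumit{\ref{enum:pillar_g}} apply: if $P_1$ is non-cut in $H$ and is not adjacent to any low component, then (being a maximal pillar of $\C$ whose removal leaves every low component attached to $H\setminus P_1$) it is a non-cut pillar of $\C$ starting above $z=0$, contradicting Corollary~\ref{col:z=0}. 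This contradiction is the heart of the paper's proof, and your plan never invokes the no-moves hypothesis at all --- without it the lemma is simply not provable. A second, lesser issue is the termination mechanism: the strict nesting $H_{Q_0}\supsetneq H_{Q_1}\supsetneq\cdots$ does not materialize, because every $P_k$ with $k\geq 1$ is non-cut in $H$, so $H\setminus P_k$ is a \emph{single} component which (via a near low component in $\L_H^-$, whose attaching pillar cannot be $P_k$ by your own one-low-neighbour observation, together with your distance estimate) lies on the $R$-side; there is no smaller $H_{Q_{k+1}}$ to recurse into. In fact, once the groundedness issue is repaired, this same observation shows the recursion is unnecessary: $P_1$ is already non-cut in $\C\setminus L_1$.

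For comparison, the paper avoids recursion entirely with a one-shot extremal choice: delete \emph{all} components of $\L_H$ at once to get $\C'$, take the cube $c$ of $H$ adjacent to $\L_H$ that is farthest in $G_{\C'}$ from a fixed root cube, and let $P$ be its pillar (which, by the same observation you make, is adjacent to exactly one component $L\in\L_H$). If $P$ were a cut pillar, a non-root component $H'$ of $\C'\setminus P$ must lie inside $H$ and avoid all cubes adjacent to $\L_H$, and then Lemma~\ref{lem:common_non_cut} plus Corollary~\ref{col:z=0} yields the contradiction described above. Your distance estimate and the one-low-neighbour observation are sound and play roughly the role of the paper's ``components closer to $R$ stay attached to $R$'' step, but the decisive step --- ruling out ungrounded non-cut pillars via the exhausted move types --- is absent from your proposal.
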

    \begin{proof}
        Let $\C'=\C\setminus \bigcup_{L\in \L_H} L$.
        Let $H_{\L_H}$ be the set of cubes of $H$ that are adjacent to a low component in $\L_H$.
        Fix an arbitrary cube $c_s$ of $R$, and in the graph $G_{\C'}$, consider a cube $c$ of $H_{\L_H}$ that is farthest from $c_s$.
        Let $P$ be the pillar of $H$ that contains $c$.
        We will show that $P$ is a non-cut pillar of $\C'$.
        Suppose for a contradiction that $P$ is a cut pillar, then $\C'\setminus P$ contains at least two component, only one of which does not contain $R$.
        Let $H'$ be such a component of $\C'\setminus P$ not containing $R$.
        If $H'$ contains a cube not in $H$, then that cube lies in a low or high component of $\C'$ that lies closer to $R$, which therefore remains connected to $R$ after removing $P$.
        Therefore, $H'$ is a subset of $H$.
        All cubes $(x,y,z)\in H'$ with $z=1$ lie farther from $c_s$ than $c$, and therefore $H'$ does not contain any cubes of $H_{\L_H}$, so $H'$ is not adjacent to any cubes of $\L_H$.
        Therefore, $H'$ is not adjacent to any cubes of $(\C\setminus P)\setminus H'$.
        By Lemma~\ref{lem:common_non_cut}, $H'$ contains at least one non-cut pillar $P'$ that is also a non-cut pillar of $H$.
        $P'$ is also a non-cut pillar of $\C$, but all non-cut pillars of $\C$ start at $z=0$ (Corollary~\ref{col:z=0}), which is a contradiction.
    \end{proof}
    
    We will now present an algorithm that repeatedly selects a (non-root) clear low component~$L$, and performs the following move on it:
    \begin{alphaenumerate}\setcounter{enumi}{6}
        \item Perform any move of $\C$ that moves a cube of $L$, reduces the potential, and results in a nonnegative instance. Same as \enumit{\ref{enum:pillar_g}}, but now on a low component.\label{enum:pillar_h}
    \end{alphaenumerate}

    Moves of type \enumit{\ref{enum:pillar_h}} serve any of three purposes:
    \begin{enumerate}[(1)]
        \item The component connects to a different low component, merging them.\label{enum:low_merge}
        \item The component connects to the root, and becomes part of the root.\label{enum:low_merge_root}
        \item The component reaches the origin (at which point it becomes the root).\label{enum:low_origin}
    \end{enumerate}
    In between these operations, we exhaustively perform potential reductions to high components using move types \enumit{\ref{enum:pillar_a}--\ref{enum:pillar_g}}.
    If none of the moves \enumit{\ref{enum:pillar_a}--\ref{enum:pillar_h}} are available, then this means that $L$ does not have enough cubes to reach the origin.

    In this case, we want to move the clearing pillar and do a pillar shove.
    However, it could be that there are cubes around the clearing pillar, or that the clearing pillar would disconnect the low component.
    Let $N$ be the set of grid cells~$c$ neighboring $P$ with $z=1$, such that $c$ lies within the bounding box of $\C$.
    
    \begin{lemma}\label{lem:pillar_surrounded}
        Let $\C$ be a configuration that does not admit moves of type \enumit{\ref{enum:pillar_a}--\ref{enum:pillar_h}}. Let $L$ be a clear component and $P = \pillar{x}{y}{z_b}{z_t}$ be its clearing pillar. The cubes in $N$ need to be either all present, or all absent from $\C$.
    \end{lemma}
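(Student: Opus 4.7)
The plan is to argue by contradiction: assume $N$ contains both a filled cell $c_1 = (x_1, y_1, 1) \in \C$ and an empty cell $c_2 = (x_2, y_2, 1) \notin \C$ (with $c_2$ inside the bounding box), and then exhibit a move of type \enumit{\ref{enum:pillar_a}--\ref{enum:pillar_h}} that $\C$ should admit, contradicting the hypothesis.

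The first step is to pin down the structure of the clearing pillar. Because $L\subseteq\C_0$ is a connected component of the $z=0$ layer and $P\subseteq\C\setminus L$ is adjacent to $L$, the only possible adjacency between $P$ and $L$ is between a cube $(x,y,1)\in P$ and the cube $(x,y,0)\in L$ directly below it: any horizontal $z=0$ neighbor of $L$ that belongs to $\C$ would be absorbed into $L$. Hence $z_b=1$, $(x,y,0)\in L$, and moreover every cube $(x',y',0)\in\C$ whose column is adjacent to $(x,y)$ lies in $L$. In particular, $c_1$ is adjacent to $(x,y,1)\in P$ and so lies in the same high component as $P$.

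Next, I would do a case analysis on the placement of $c_1$ and $c_2$ among the (at most four) sides of $P$. If $c_1$ and $c_2$ lie on perpendicular sides, a convex transition around the shared corner cube $(x,y,1)$ can send one of them to the other, provided the diagonal cell $(x_1+(x_2-x),\,y_1+(y_2-y),\,1)$ is empty. Because the weighted potential $x+2y+4z$ is strictly asymmetric in the $x$- and $y$-coordinates, one of the two directions along such a corner strictly decreases $\Pi_\C$; this move is of type \enumit{\ref{enum:pillar_g}}. If $c_1,c_2$ lie on opposite sides, or if the diagonal corner is filled, I would trace down to the $z=0$ layer and use the extended pillar $\bar P = \pillar{x}{y}{0}{z_t}$ together with an $L$-cube adjacent to the column of $c_2$ to construct a short slide/convex-transition sequence moving either $c_1$ (type \enumit{\ref{enum:pillar_g}}) or a cube of $L$ (type \enumit{\ref{enum:pillar_h}}) to a lower-potential cell. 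The fact that $c_2$ lies inside the bounding box guarantees that the destination is nonnegative and indeed of lower potential.

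The hard part will be verifying the validity of the constructed moves (preserving connectivity and nonnegativity) in every subcase, particularly when $P$ is a single cube so that moving its head would momentarily isolate $L$ from the rest of $\C\setminus L$, or when the column of $c_1$ does not descend to $z=0$. In these degenerate subcases, one must first unlock $P$ as in operation \enumit{\ref{enum:pillar_c}} or use the $L$-cube $(x,y,0)$ as a pivot, and then execute the planned move; here the hypothesis that none of \enumit{\ref{enum:pillar_a}--\ref{enum:pillar_f}} applies to any non-cut subpillar of $\C$ should rule out the obstructions. The asymmetric weights in $\Pi$ guarantee a strict decrease in potential on every branch of the case analysis, closing the contradiction.
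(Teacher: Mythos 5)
Your structural preliminaries are fine ($z_b=1$, $(x,y,0)\in L$, and $c_1$ lying in the same high component as $P$ all follow as you say), but the heart of the lemma --- exhibiting a concrete move whose legality and potential decrease are verified --- is missing. In your perpendicular-sides case the only executable move is $c_1$ into $c_2$ (only $c_1$ is occupied), so the claim that ``one of the two directions'' decreases the potential is vacuous: a horizontal move at $z=1$ changes $c_x+2c_y$ by $\pm 1$ or $\pm 3$ depending on which side happens to be filled, and since the configuration, not you, decides that, in half of the subcases the move strictly increases the potential. You also never verify that $c_1$ is non-cut (it may carry a pillar above it or be the only link between parts of the high component). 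Every remaining subcase --- opposite sides, filled diagonal, the degenerate pillar --- is deferred to an unspecified ``short slide/convex-transition sequence,'' which is exactly where the work lies; note moreover that \enumit{g} and \enumit{h} are single moves, so each step of such a sequence would itself have to reduce the potential and preserve connectivity, which you do not establish.

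The paper's proof uses a different and much more robust witness that your proposal never considers: it moves a cube of the clearing pillar itself. Since some cell of $N$ is empty while some neighbor $c_1$ at $z=1$ is present, the paper lets the cube $(x,y,2)$ of $P$ drop into the empty $z=1$ cell. Because this move strictly decreases the $z$-coordinate (weight $5$ to $4$), the potential decrease is automatic and independent of which side of $P$ is empty --- precisely the uniformity your horizontal moves at $z=1$ lack --- and connectivity is argued from the facts that $(\C\setminus L)\setminus P$ is connected (the clearing pillar is non-cut in $\C\setminus L$) and that the occupied neighbor keeps the lower part of the column and $L$ attached. Without this idea (or an equally uniform substitute), your case analysis cannot be closed as written.
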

    \begin{proof}
        Assume for contradiction that at least one, but at most three of the cubes neighboring $P$ with $z=1$ are present.
        Let these cubes be $c_1$, $c_2$, and $c_3$, not all need to be present.
        Since $P$ is a clearing pillar, $(\C\setminus L)\setminus P$ is connected.
        Therefore, the cube $(x, y, 2)$ can do a potential reducing move of type \enumit{\ref{enum:pillar_g}}, which is a contradiction.
    \end{proof}

    \begin{alphaenumerate}\setcounter{enumi}{7}
        \item Let $L$ be a clear component and $P = \pillar{x}{y}{z_b}{z_t}$ be its clearing pillar. Assume that all cubes in $N$ be present. By definition, $L\neq R$. Therefore, there exists an empty spot $e$, with coordinates $(e_x, e_y, 0)$, with $e_x < x$ or $e_y < y$. Let $e$ be such an empty spot with highest $y$, and from those, the one with highest $x$.
        We now take the cube $(x, y, 0)$ and move it on a shortest path via $z=-1$ towards $e$, reducing its potential.
        Because all cubes in $N$ are present, and because those cubes are part of $\C\setminus L$, of which $P$ is a non-cut pillar, the configuration stays connected during this move.\label{enum:pillar_i}
        \item[\labelitemi] If \enumit{\ref{enum:pillar_a}--\ref{enum:pillar_i}} do not apply, then all cubes from $N$ are not in $\C$ by Lemma~\ref{lem:pillar_surrounded}.
        \item Let $L$ be a clear component and $P = \pillar{x}{y}{z_b}{z_t}$ be its clearing pillar. All cubes in $N$ are absent from $\C$. Gather cubes from $L$ towards $P$ according to Figure~\ref{fig:pillar_shove_with_extra_cubes} and do a pillar shove on $P$. Then, move the extra cubes back to their original location.\label{enum:pillar_j}
    \end{alphaenumerate}

    The only reason that \enumit{\ref{enum:pillar_f}} is not possible, is because $P$ is a cut pillar, since it is the only pillar connecting $L$ to the other components.
    Therefore, gathering cubes from $L$ to $P$ makes this pillar shove viable.

    This is done in the following way.
    Let the clearing pillar of $L$ be $P = \pillar{x}{y}{z_b}{z_t}$.
    Note that because \enumit{\ref{enum:pillar_a}--\ref{enum:pillar_i}} do not apply, $z_b = 1$. Let $z_t$ be the highest $z$ such that only $(x, y, z_t)$ has a horizontal neighboring cube $(x', y', z_t)$. Let $(x', y')$ be the side of $P$ that is lexicographically smallest. This is the direction in which we would like to do a pillar shove. Let $c=(x, y, 0) \in L$ be the cube below $P$.
    Assume that $P$ has at least size $4$.
    Then, repeatedly select the non-cut cube~$c\in L$ that lexicographically maximizes $(z, y, x)$ and move it towards $P$ to create the configuration visible in \cref{fig:pillar_shove_with_extra_cubes}.
    Now $P$ can execute a pillar shove in the direction of $(x',y')$.
    If $P$ has less than $4$ cubes we gather cubes towards $P$ as in \cref{fig:pillar_shove_with_extra_cubes_small}.
    Then, using a constant number of moves, we can decrease the potential vector, while maintaining connectivity.

    \begin{figure}[h]
        \centering
        \includegraphics{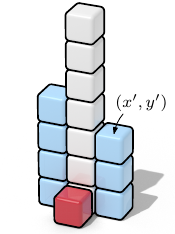}
        \caption{The start configuration for a pillar shove for a clearing pillar of height at least $4$. The blue cubes are required and need to be gathered. The red cube is part of $L$.}
        \label{fig:pillar_shove_with_extra_cubes}
    \end{figure}
    \begin{figure}[h]
        \centering
        \includegraphics{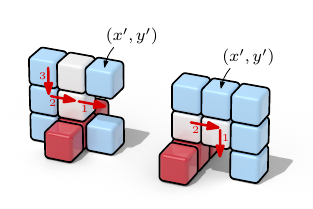}
        \caption{The start configuration for a pillar shove for a small (size less than 4) clearing pillar. The blue cubes are required to keep $L$ connected to the clearing pillar and will need to be gathered if they are not present yet. The red cubes can be part of $L$. Left: Blue cubes necessary for a clearing pillar of height $2$ or $3$. Right: configuration necessary for a clearing pillar of height $1$.}
        \label{fig:pillar_shove_with_extra_cubes_small}
    \end{figure}

    This algorithm terminates when no clear low component (and hence only the root low component) remains.
    We are left with two cases: Either no high component remains ($d = 0$), or there is at most one high component ($d=1$), which consists of entirely finished cubes.

    All of the moves \enumit{\ref{enum:pillar_a}--\ref{enum:pillar_j}} not only work in $3$ dimensions, they also work in $2$ dimensions when instead of prioritizing reducing the $z$-coordinate, we prioritize reducing the $y$-coordinate.
    Moreover, these moves never move the origin.
    Therefore, we can now run the exact same moves on the bottom layer in 2D, until the root component is finished.
    If there is still a high component, it stays connected via the origin.
    We end up with a finished configuration.

    \subparagraph*{Running time.}
    Recall that the \emph{potential} of a cube $c = (c_x, c_y, c_z)$ is $\Pi_c = w_c(c_x + 2c_y + 4c_z)$, where the \emph{weight} $w_c$ depends on the coordinates of $c$ in the following way.
    If $c_z > 1$, then $w_c = 5$, if $c_z = 1$, then $w_c = 4$.
    If $c_z = 0$, then $w_c$ depends on $c_y$.
    If $c_y > 1$, then $w_c = 3$, if $c_y = 1$, then $w_c = 2$ and lastly, if both $c_z = c_y = 0$, then $w_c = 1$.
    The potential of the complete configuration is the sum of potential of the individual cubes.
    Moreover, a sequence of $m$ moves is \emph{safe} if the result is a nonnegative instance $\C'$, such that $\Pi_{\C'} < \Pi_{\C}$ and $m = O(\Pi_{\C} - \Pi_{\C'})$.
    Each move of type \enumit{\ref{enum:pillar_a}--\ref{enum:pillar_j}} strictly reduces the potential function.
    Moreover, each of the moves \enumit{\ref{enum:pillar_a}--\ref{enum:pillar_i}} is safe.
    We show the same for operations of type \enumit{\ref{enum:pillar_j}}:

    \begin{lemma}\label{lem:j_sum_of_coords}
    Operations of type \enumit{\ref{enum:pillar_j}} are safe.
    \end{lemma}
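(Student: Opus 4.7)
The plan is to split each type-(\ref{enum:pillar_j}) operation into three phases—gather, pillar shove, ungather—and analyze the potential change and move count contributed by each phase. First I will check feasibility: every phase preserves connectivity (using that $P$ is a clearing pillar, so $(\C\setminus L)\setminus P$ stays connected, and that by the case hypothesis every cell in $N$ is empty, leaving room on the lex-smallest side $(x',y')$) and keeps the configuration in $\N^3$ (all cube moves stay inside the bounding box of $\C$). Because the gather and ungather phases are exact inverses of one another, their contributions to $\Pi$ cancel, and the net potential change of the whole operation equals that of the single pillar shove sandwiched between them.

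Second, I will show that the shove reduces $\Pi_\C$ by $\Omega(z_t-z_b)$ using only $O(z_t-z_b)$ moves, so the shove phase alone is safe. For each of the $z_t-z_b$ cubes that the shove relocates from $(x,y,z_b+i)$ to $(x',y',z_b+i-1)$, the $z$-coordinate drops by $1$ while the weight stays at least $4$ (since both old and new $z$ are $\geq 1$), giving a drop of at least $16$ in the $4z$-term; the horizontal shift to a lex-smallest neighbor changes $x+2y$ by at most $+2$, costing at most $+10$ given weight $\leq 5$; and the weight can change at most once, for the single cube that crosses from $z=2$ to $z=1$. Summed over the relocated cubes, this yields a positive-constant drop per cube, hence $\Omega(z_t-z_b)$ overall, matched by the $O(z_t-z_b)$ move count of the shove.

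Finally, I will bound the cost of the gather and ungather. In the short case $|P|<4$, \Cref{fig:pillar_shove_with_extra_cubes_small} describes a constant-size gadget, giving $O(1)$ total moves and a constant potential drop, so safety is immediate. In the long case $|P|\geq 4$, only a constant number of blue slots of \Cref{fig:pillar_shove_with_extra_cubes} need to be filled, and I plan to argue that each missing blue slot can be populated in $O(1)$ moves by selecting a lex-maximal non-cut cube of $L$ and sliding it through the empty ring $N$ around $P$ into its target position, exploiting that $(x,y,0)\in L$ together with the emptiness of $N$ provides a free routing region around the base of $P$.

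The main obstacle will be making this $O(1)$-per-slot routing argument precise without letting the geometry of $L$ blow up the gather length: if a lex-maximal non-cut cube of $L$ happens to lie far from $P$, one must argue either (i) that the route can be shortened to $O(1)$ moves through the empty cells of $N$, or (ii) that the individual gather moves already descend in $\Pi$ (since moving a lex-maximal cube of $L$ toward $P$ strictly decreases its coordinates), so that the gather subsequence is safe in isolation and the ungather can be absorbed by using a shorter return path through the space freed up after the shove. Once a per-slot $O(1)$ bound for gather+ungather is established, combining with the shove bound gives a total of $O(z_t-z_b)+O(1)=O(\Pi_\C-\Pi_{\C'})$ moves, completing the proof that (\ref{enum:pillar_j}) is safe.
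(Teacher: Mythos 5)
There is a genuine gap, and it sits exactly where you flag it: the gather/ungather cost. Your accounting attributes the entire potential drop to the shove phase, which only yields $\Omega(z_t-z_b)$, and then you need each blue slot of the gather to be fillable in $O(1)$ moves. That is not true in general: the cubes you gather are drawn from the low component $L$, and a lex-maximal non-cut cube of $L$ can lie $\Theta(|L|)$ steps away from $P$; routing it ``through the empty ring $N$'' does not help with the distance it must travel inside $L$ before it ever reaches that ring. Your fallback (ii) also fails, because cubes of $L$ may sit at smaller $x$- or $y$-coordinates than $P$, so moving them toward $P$ can \emph{increase} their potential; the only reason their moves are harmless is that they are returned to their original cells, which makes their net potential contribution zero but does not make the gather moves self-paying. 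With a short clearing pillar (say $z_t-z_b=O(1)$) and $|L|$ large, your bound $O(z_t-z_b)+O(1)$ simply does not cover the moves performed.

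The paper closes this with two observations you are missing. First, $|L|=O(x+y)$, because a type-\enumit{\ref{enum:pillar_h}} move is unavailable precisely when $L$ cannot reach the origin, so gathering a constant number of cubes and returning them costs $O(x+y)$ moves, not $O(1)$. Second, the potential drop of a type-\enumit{\ref{enum:pillar_j}} operation is not just $\Omega(z_t-z_b)$: the head of the pillar descends from $z=z_t>1$ to $z=1$ (or from $z=1$ to $z=0$ in the height-one case), so its \emph{weight} decreases, contributing an extra drop of order $x+y$ on top of the $2(z_t-1)$ from the vertical descent (the paper computes a net decrease of $2(c_z-1)+c_x+c_y$). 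It is this $\Omega(x+y)$ term that amortizes the $O(x+y)$ gather/ungather moves; without it the move count cannot be charged to the potential. (A smaller point: your per-cube estimate ``$-16$ plus at most $+10$'' treats the weight as multiplying the $z$-term and the $x+2y$-term separately; since $\Pi_c=w_c(c_x+2c_y+4c_z)$, the correct bound is obtained by comparing $w'(x'+2y'+4(z-1))$ with $w(x+2y+4z)$ using $w'\le w$, which still gives a constant drop per relocated cube, so this is fixable but as written the arithmetic is not rigorous.)
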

    \begin{proof}
        We will show that any operation of type \enumit{\ref{enum:pillar_j}} strictly decreases $P_\C$ by $O(x)$ and uses $O(x)$ moves to do so.
        First analyse the operations of type \enumit{\ref{enum:pillar_j}} with a pillar of size larger than one, the head~$c = (c_x, c_y, c_z)$ of the pillar involved moves down from its own $z$-coordinate to $z=1$, so $P_\C$ reduces by $4(c_z - 1)$. The cubes beneath $c$ from $z=1$ up to $c_z$ might increase their $x$- or $y$-coordinate by $1$. Therefore, the potential also increases by $2(c_z - 1)$ at most. The cubes that are gathered and then returned do not move positions and therefore do not affect the potential.
        Moreover, $w_c$ for the head becomes one lower, because $c_z$ decreases from $c_z > 1$ to $c_z = 1$.
        In total, the potential~$P_\C$ decreases by $2(c_z - 1) + c_x + c_y$.
        For operations of this type with a pillar of size one, the potential decreases by $c_z + c_x + c_y$, since the head moves from $z=1$ to $z=0$.

        Now we will show that executing one of these operations, which reduces $\Pi_\C$ by $O(c_z + c_x + c_y)$ takes $O(c_z + c_x + c_y)$ moves and is therefore safe.
        Moving via a shortest path over a component with $x$ cubes takes $O(x)$ moves.
        Gathering the seven cubes from $L$ to the clearing pillar and moving them back takes at most $O(c_x + c_y)$ moves, since $L$ cannot reach the origin and has therefore size at most $O(c_x + c_y)$.
        Then, the normal pillar shove takes $O(c_z)$ moves.
        Thus, the total operation takes $O(c_z + c_x + c_y)$ moves and is therefore safe.
    \end{proof}

    Because all operations are safe and reduce the potential, the algorithm performs at most $O(\Pi_\C) = O(X_\C + Y_\C + Z_\C)$ moves.
    For the problem of reconfiguring the cubes into a finished configuration, this is worst-case optimal. 
    An example achieving this bound is a configuration consisting of a path of cubes in a bounding box of equal side lengths $w$ tracing from the origin to the opposite corner of the bounding box. 
    To see that, note that any finished cube at position $(x,y,z)$ requires there to exist $n\geq x y z$ cubes, so at least one of $x$, $y$, and $z$ is at most $n^{1/d}$ for any candidate finished position. 
    There are $\Omega(n)$ cubes $(x',y',z')$ that are initially $\Omega(w-n^{1/3})=\Theta(n)=\Theta(x'+y'+z')$ away from any such potential finished position.

\section{Conclusion}
    We presented an in-place algorithm that reconfigures any configuration of cubes into a compact canonical shape using a number of moves that is proportional to the sum of coordinates of the input cubes. This result is asymptotically optimal. However, just as many other algorithms in the literature, our bounds are amortized in the sense that
    we make use of a number of dedicated cubes which help other cubes move by establishing the necessary connectivity in their neighborhood. This is in particular the case with our pillar shoves, that need some additional cubes to gather at the pillar, to then move up and down the pillar to facilitate moves. These extra moves are charged to one cube in the pillar reducing its coordinates. In the literature such cubes are referred to as \emph{helpers}, \emph{seeds}, or even \emph{musketeers}~\cite{AkitayaADDDFKPP21-musketeers,ConnorM22-musketeers,slideonly-icalp17}.
    
    Such helping cubes are in many ways in conflict with the spirit of modular robot reconfiguration: ideally each module should be able to run the same program more or less independently, without some central control system sending helpers to those places where they are needed. The input-sensitive Gather\&Compact algorithm in 2D by Akitaya et al.~\cite{a2022compacting} does not require amortized analysis and gives a bound on the number of moves for each square in terms of the perimeters of the input and output configurations. 
    The question hence arises whether it is possible to arrive at sum-of-coordinates bounds either in 2D or 3D without amortization?
    For example, is there a compaction algorithm in which each cube in the configuration that starts at position $(x, y, z)$ performs at most $O(x+y+z+a)$ moves, where $a$ is the average $L_1$-distance that cubes lie from the origin?
    
\bibliography{bibliography.bib}
\end{document}